\newcommand\copyrighttext{
  \small \textcopyright 2020 IEEE. Personal use of this material is permitted. Permission from IEEE must be
  obtained for all other uses, in any current or future media, including reprinting/republishing this material for advertising or promotional purposes, creating new collective works, for resale or redistribution to servers or lists, or reuse of any copyrighted component of this work in other works. \newline This article is an extended version of a paper accepted for publication in IEEE Control Systems Letters. Citation information: DOI \href{https://doi.org/10.1109/LCSYS.2020.3003771}{10.1109/LCSYS.2020.3003771}
  }
\newcommand\copyrightnotice{%
\begin{tikzpicture}[remember picture,overlay]
\node[anchor=north,yshift=0.01pt] at (current page.north) {\fbox{\parbox{\dimexpr\textwidth-\fboxsep-\fboxrule\relax}{\copyrighttext}}};
\end{tikzpicture}%
}
\def\BibTeX{{\rm B\kern-.05em{\sc i\kern-.025em b}\kern-.08em
    T\kern-.1667em\lower.7ex\hbox{E}\kern-.125emX}}
\newcommand{\vo}[1]{\boldsymbol{#1}}
\newcommand{\mo}[1]{\boldsymbol{#1}}
\newcommand{\A}{\vo{A}}
\newcommand{\B}{\vo{B}}
\newcommand{\F}{\vo{F}}
\newcommand{\x}{\vo{x}}
\newcommand{\y}{\vo{y}}
\newcommand{\z}{\vo{z}}
\newcommand{\w}{\vo{w}}
\newcommand{\n}{\vo{n}}
\newcommand{\e}{\vo{e}}
\newcommand{\Lg}{\vo{L}} % L gain
\newcommand{\nuu}{N_u}
\newcommand{\nv}{\ny}
\newcommand{\nx}{N_x}
\newcommand{\ny}{N_y}
\newcommand{\nz}{N_z}
\newcommand{\nw}{N_w}
\newcommand{\nd}{N_d}
\newcommand{\Htwo}{\mathcal{H}_2}
\newcommand{\Hinf}{\mathcal{H}_{\infty}}
\newcommand{\norm}[2]{\left\lVert#1\right\rVert_{#2}}
\newcommand{\Bd}{\vo{B}_d}
\newcommand{\Bu}{\vo{B}_u}
\newcommand{\Bv}{\vo{B}_n}
\newcommand{\Bw}{\vo{B}_w}
\newcommand{\Bwb}{\vo{B}_{\overline{\w}}}
\newcommand{\Cy}{\vo{C}_y}
\newcommand{\Cz}{\vo{C}_z}
\newcommand{\Dd}{\vo{D}_d}
\newcommand{\Du}{\vo{D}_u}
\newcommand{\Dv}{\vo{D}_n}
\newcommand{\Dw}{\vo{D}_w}
\newcommand{\Dwb}{\vo{D}_{\overline{\w}}}
\newcommand{\Sd}{\vo{S}_d}
\newcommand{\Sv}{\vo{S}_n}
\newcommand{\Sw}{\vo{S}_w}
\newcommand{\zerr}{\vo{\varepsilon}}
\newcommand{\xerr}{\e}
\newcommand{\Gwz}{\G_{\bar{\w}\rightarrow\zerr}(s)}
\newtheorem{theorem}{Theorem}% [section]
\newcommand{\xdot}{\dot{\vo{x}}}
\newcommand{\comment}[1]{\textcolor{red}{#1}}
\newcommand{\blue}[1]{\textcolor{blue}{#1}}
\newcommand{\response}[1]{{#1}}
\newcommand{\responseOne}[1]{{#1}}
\newcommand{\responseTwo}[1]{{#1}}
\newcommand{\responseThree}[1]{{#1}}
\newcommand{\responseFour}[1]{{#1}}
\newcommand{\set}[1]{\mathcal{#1}}
\newcommand{\Exp}[1]{\mathbb{E}\left[#1\right]}
\newcommand{\E}[1]{\Exp{#1}}
\newcommand{\I}[1]{\vo{I}_{#1}}
\newcommand{\X}{\vo{X}}
\newcommand{\W}{\vo{W}}
\newcommand{\Y}{\vo{Y}}
\newcommand{\R}{\vo{R}}
\newcommand{\M}{\vo{M}}
\newcommand{\Q}{\vo{Q}}
\renewcommand{\u}{\vo{u}}
\renewcommand{\P}{\mo{P}} % state
\newcommand{\Real}{\mathbb R}
\newcommand{\inner}[1]{\left\langle \vo{e}\phi_i\right\rangle}
\newcommand{\eqnlabel}[1]{\label{eqn:#1}}
\newcommand{\eqn}[1]{(\ref{eqn:#1})}
\newcommand{\fig}[1]{Fig. (\ref{fig:#1})}
\DeclareMathAlphabet{\mathbfsf}{\encodingdefault}{\sfdefault}{bx}{n}
\newcommand{\C}{\vo{C}}
\newcommand{\D}{\vo{D}}
\newcommand{\G}{\vo{G}}
\newcommand{\domain}[1]{\set{D}}
\newcommand{\diag}{\textbf{diag}}
\newcommand{\trace}[1]{\textbf{trace}\left( #1 \right)}
\title{\LARGE \bf Sparse Sensing and Optimal Precision: An Integrated Framework for $\Htwo/\Hinf$ Optimal Observer Design}
\author{Vedang M. Deshpande$^{1,3}$ and Raktim Bhattacharya$^{2,3}$% <-this % stops a space
\thanks{This work was supported by the National Science Foundation (grant
number: 1762825).}% <-this % stops a space
\thanks{$^{1}$Vedang M. Deshpande is a Ph.D. student in Aerospace Engineering. {\tt\small vedang.deshpande@tamu.edu}}%
\thanks{$^{2}$Raktim Bhattacharya is Associate Professor in Aerospace Engineering,
Electrical \& Computer Engineering. {\tt\small raktim@tamu.edu}}
\thanks{$^{3}$Texas A\&M University, College Station, TX 77843, USA.}%
}
\begin{document}

\maketitle
\thispagestyle{empty}
\copyrightnotice
\pagestyle{empty}

\begin{abstract}
In this paper, we simultaneously determine the optimal sensor precision and the observer gain, which achieves the specified accuracy in the state estimates. Along with the unknown observer gain, the formulation parameterizes the scaling of the exogenous inputs that correspond to the sensor noise. Reciprocal of this scaling is defined as the sensor precision, and sparseness is achieved by minimizing the $l_1$ norm of the precision vector. The optimization is performed with constraints guaranteeing specified accuracy in state estimates, which are defined in terms of $\Htwo$ or $\Hinf$ norms of the error dynamics. The results presented in this paper are applied to the linearized longitudinal model of an F-16 aircraft.
\end{abstract}
\begin{keywords}
Sparse sensing,  $\Htwo$ and $\Hinf$ optimal observers, optimal precision, convex optimization.
\end{keywords}

\section{INTRODUCTION}
The conventional observer design deals with the problem of determining observer gain for a system, given the set of sensors with pre-specified precision, to achieve the desired performance index. Here the precision is related to the sensor noise signal, and can be quantified by the inverse of variance or $\mathcal{L}_2$-norm of the signal.
Often in control system design the sensors are pre-selected and the performance of control and estimation algorithms are limited by this choice. Therefore, it may be possible that unnecessarily precise sensors are included in the system, for a required performance. Or if more performance is desired, it is unclear which sensors to improve, or even where to add new sensors. For large-scale systems, this question becomes difficult and non trivial. We address this problem in the context of state-estimation for LTI (linear time invariant) systems.

In this paper, we consider the problem of selecting a sparse set of sensors and simultaneously determining the minimum required precision, for observer design for LTI systems. The problem is formulated in $\Htwo/\Hinf$ optimal estimation framework, and posed as a convex optimization problem. This problem is not new and considerable amount of work exists in the literature \cite{boyd2009tac, Nugroho2018acc, Jovanovic2014cdc, Jovanovic2018cdc, Jovanovic2019TAC, Roy2013CAMSAP,  Polyak2013ecc, Lopez2014acc, Wolfrum2014tac, Sundaram2017automatica, das2017icssa, das2020ifac, skelton2008jour, saraf2017acc, Matni2016tac,Matni2019tac}.

In \cite{boyd2009tac}, authors formulated the sensor selection problem as a Boolean convex optimization problem and relaxed it by allowing parameters to vary continuously between 0 and 1. A parameter is set to zero if it comes out to be less than a pre-specified value while maximizing the confidence ellipsoid of the unbiased estimate. A framework for simultaneous sensor and actuator selection while ensuring stability in terms of Boolean variables was proposed in \cite{Nugroho2018acc}.

The formulations presented in  \cite{Jovanovic2014cdc, Jovanovic2018cdc, Polyak2013ecc,Jovanovic2019TAC,Wolfrum2014tac} augment the cost function with sparsity-promoting penalty on the columns of observer gain matrix (rows of controller gain matrix) to get a sparse set of sensors (actuators). %, and solve the optimization problem with a customized algorithm.
Works in \cite{das2017icssa,das2020ifac,Sundaram2017automatica,Lopez2014acc,Wolfrum2014tac} considered minimal sensor selection for discrete time systems. % in Kalman filtering and $\Htwo$ optimal framework respectively.
A discussion on system level approach to control/sensing architecture design with sparsity constraints can be found in \cite{Matni2016tac,Matni2019tac}.

Aforementioned papers \cite{Nugroho2018acc, boyd2009tac, Jovanovic2014cdc, Jovanovic2018cdc, Jovanovic2019TAC, Sundaram2017automatica, Lopez2014acc, Roy2013CAMSAP, Polyak2013ecc, Wolfrum2014tac} assume that the precision of sensors is known or fixed. On the other hand, the framework proposed in \cite{skelton2008jour} treats the sensor and actuator precision as design variables to be determined, while guaranteeing the optimal controller performance. The work in \cite{skelton2008jour} was extended for models with parametric uncertainty in \cite{saraf2017acc}.

\subsubsection*{\text{Contribution and novelty}}
\responseFour{The primary focus of this paper is to present an integrated theoretical framework to design $\Htwo/\Hinf$ optimal observers with sparse sensor configurations, while simultaneously minimizing the required sensor precision.
Motivated by \cite{skelton2008jour}, in this paper, we treat sensor precision as an unknown variable, unlike existing sparse sensor selection frameworks discussed above.
We consider the  $\Htwo/\Hinf$ optimal observer design problem for continuous LTI systems with a specified  performance criterion.
The objective here is twofold. First, we are interested in minimizing the sensor precision, and second, we want to obtain a sparse sensor configuration. The optimal precision for sensors is determined by minimizing the sparsity-promoting $l_1$-norm of the precision vector.
The aforementioned frameworks obtain sparse sensor configuration by inducing column-sparseness in the observer gain, assuming that the sensor precisions are given. In our work we induce sparseness by directly scaling the individual sensor channels, and simultaneously determine the observer gain for those precisions. To the best of our knowledge, this is the first integrated formulation for designing $\Htwo/\Hinf$ optimal observers.}

% Motivated by \cite{skelton2008jour}, in this paper, we focus on the optimal observer design while simultaneously determining the minimum sensor precision required to realize the observer. The optimal precision for sensors is determined by minimizing sparsity-promoting 1-norm of the precision vector. We discuss the observer design in the context of both $\Htwo$ and $\Hinf$ optimal frameworks, whereas existing works \cite{Jovanovic2014cdc, Jovanovic2018cdc, Lopez2014acc, skelton2008jour, saraf2017acc} restrict themselves to the $\Htwo$ optimal framework only.

The paper is organized as follows. The sparse $\Htwo/\Hinf$ observer design problems are formulated in \S \ref{sec:prob}. Solutions to the observer design problems are presented in \S \ref{sec:thms} as Theorems \ref{thm:h2} and \ref{thm:hinf}. In \S \ref{sec:ex}, we consider a numerical example, followed by the concluding remarks in \S \ref{sec:concl}.
% We formulate the problem of sparse $\Htwo/\Hinf$ observer design in  \S \ref{sec:prob}. The solution to  sparse observer design problem is presented in \S \ref{sec:thms} as Theorems \ref{thm:h2} and \ref{thm:hinf}, which are main theoretical results of the paper. % In \S \ref{sec:ex}, we consider a linearized model of an F-16 aircraft as an example, followed by concluding remarks in \S \ref{sec:concl}.

\section{Problem Formulation} \label{sec:prob}
\subsection{Notation}
The set of real numbers is denoted by $\Real$. Matrices (vectors) are denoted by bold uppercase (lowercase) letters e.g. $\A, \Bu$ ($\x$, $\y$). $\A^T$ denotes the transpose of $\A$. We define $\textbf{sym}\left(\A\right):=\A+\A^T$. We use the notation $\A>0$ ($\A<0$) for symmetric positive (negative) definite matrices. For an integer $N>0$, $\I{N}$ denotes the $N\times N$ identity matrix. Zero matrix of suitable dimensions is denoted by $\vo{0}$. For any $r\in \Real$, $\x^{r}$ denotes the vector with element-wise powers raised to $r$. A diagonal matrix constructed from $\x$ is denoted by $\diag(\x)$. Similarly, $\diag\left(\A_1,\A_2,\cdots,\A_N \right)$ denotes the block diagonal matrix. %  constructed from the matrices $\A_1,\A_2,\cdots,\A_N$. \\
\subsection{System and observer}
Consider the following LTI system
\begin{subequations}
\begin{align}
    \xdot(t) &= \A\x(t) + \Bu\u(t) + \Bw\w(t), \\
    \y(t) &= \Cy\x(t) + \Du\u(t) + \Dw\w(t), \\
    \z(t) &= \Cz\x(t),
\end{align} \eqnlabel{sys}
\end{subequations}
where, $\x\in\Real^{\nx}$, $\y\in\Real^{\ny}$, $\z\in\Real^{\nz}$ are respectively the state vector, the vector of measured outputs, and the output vector of interest.  The vector of control inputs is denoted as $\u\in\Real^{\nuu}$, and  $\w\in\Real^{\nw}$ is the vector of disturbance signals partitioned as
\begin{align}
    \w(t) :=\begin{bmatrix} \vo{d}(t) \\ \n(t) \end{bmatrix}, \nonumber
\end{align}
where, $\vo{d} \in \Real^{\nd}$ is the process noise, and $\vo{n} \in \Real^{\nv}$ is the sensor noise. The real matrices $\A, \Bu, \Bw, \Cy, \Cz, \Du$, and $\Dw$ are system matrices of appropriate dimensions.

Let us consider the full-order state observer for the system \eqn{sys} given by
\begin{subequations}
\begin{align}
    \dot{\hat{\x}}(t) =& \left(\A+\Lg\Cy \right)\hat{\x}(t) - \Lg\y(t) \nonumber \\ & +\left(\Bu+\Lg\Du\right)\u(t), \\
    \hat{\z}(t) =& \Cz\hat{\x}(t),
\end{align}
\eqnlabel{obs}
\end{subequations}
where, $\hat{\x}\in\Real^{\nx}$ is the estimate of the state vector, $\hat{\z}\in\Real^{\nz}$ is the estimate of the output vector of interest, and the $\Lg \in \Real^{\nx \times \ny}$ is the observer gain.
Let us define the error vectors as
\begin{align*}
    \xerr(t) := \x(t)-\hat{\x}(t), \text{ and }
    \zerr(t) := \z(t)-\hat{\z}(t).
\end{align*}
Therefore, from equations \eqn{sys} and \eqn{obs}, the observation error system can be written as
\begin{subequations}
\begin{align}
    \dot{\e}(t) &= \left(\A+\Lg\Cy\right)\xerr(t) + \left(\Bw+\Lg\Dw\right)\w(t),\\
    \zerr(t) &= \Cz\xerr(t) .
\end{align} \eqnlabel{obs_err}
\end{subequations}
The objective is to determine the gain matrix $\Lg \in \Real^{\nx \times \ny}$ such that $\left(\A+\Lg\Cy \right)$ is stable, and the effect of $\w$ on $\zerr$ is minimal.

The matrices $\Bw$ and $\Dw$ in \eqn{obs_err} can be partitioned as
\begin{align}
    \Bw = \begin{bmatrix}\Bd & \Bv\end{bmatrix}, \text{ and }
    \Dw = \begin{bmatrix}\Dd & \Dv\end{bmatrix}. \eqnlabel{BD_part}
\end{align}
The process is assumed to be independent of sensor noise, i.e. $\Bv=\vo{0}$, and individual sensor channels are independent of each other, i.e. $\Dv=\I{\nv}$.
Now, we define the scaled disturbance signal $\bar{\w}(t)$  as
\begin{align}
    \bar{\w}(t) :=\begin{bmatrix} \bar{\vo{d}}(t) \\ \bar{\n}(t) \end{bmatrix}, \text{  such that, }
  \w(t) = \underbrace{\begin{bmatrix} \Sd & \vo{0}\\ \vo{0} & \Sv \end{bmatrix}}_{=:\Sw} \bar{\w}(t), \eqnlabel{scale}
\end{align}
where, $\Sd \in\Real^{\nd\times\nd}$, $\Sv \in\Real^{\nv\times\nv}$ are constant diagonal scaling matrices with non-negative elements.

The plant ($P$) and estimator ($E$) system  given by equations \eqn{sys} and \eqn{obs} with scaled disturbance is shown in \fig{plant_est}.
\begin{figure}[htb]
    \centering
    \includegraphics[width=0.32\textwidth]{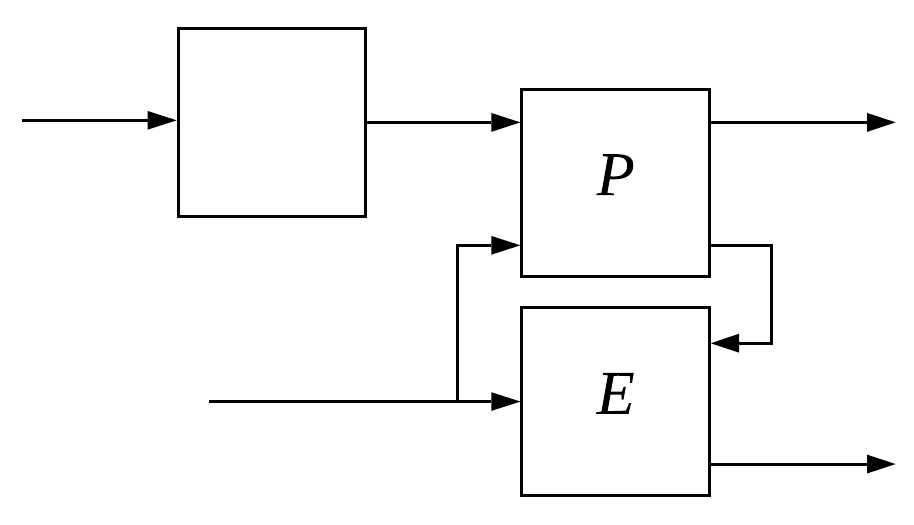}
    \begin{picture}(0,0)
        \put(-20,15){$\hat{\z}(t)$}
        \put(-27,40){$\y(t)$}
        \put(-20,75){$\z(t)$}
        \put(-97,75){$\w(t)$}
        \put(-165,75){$\bar{\w}(t)$}
        \put(-135,28){$\u(t)$}
        \put(-122,68){$\Sw$}
        % \put(-186.2,98){\small{$\begin{bmatrix} \Sd & \vo{0} \\ \\ \vo{0} & \Sv \end{bmatrix}$}}
    \end{picture}
    \caption{The plant and estimator system.}
    \label{fig:plant_est}
\end{figure}

Let $\vo{\kappa} =\left[\kappa_1,\cdots, \kappa_{\nv}\right]^T \in \Real^{\nv}$ such that
\begin{align}
   & \diag(\vo{\kappa}) := \Sv^{-1} \implies (\Sv\Sv^T)^{-1} = \diag(\vo{\kappa}^2).  \eqnlabel{alp2_def} % \eqnlabel{alp_def}  &\tilde{\Sv}:=\\
\end{align}
% \comment{RB: Need to define $\vo{\kappa}^2$ to denote the precision vector. You can use the text in the comment below this line.}
 % , and $\vo{\kappa}^2 \in \Real^{\nv}$ is the elementwise squared vector obtained from $\vo{\kappa}$. Throughout the paper, index of a vector is used to denote the elementwise indices.
Combining equations \eqn{obs_err}, \eqn{BD_part} and \eqn{scale}  yields
\begin{subequations}
\begin{align}
    \dot{\e}(t) = & \left(\A+\Lg\Cy\right)\e (t)\nonumber \\ &+ \Big(\underbrace{\begin{bmatrix}\Bd\Sd & \vo{0}\end{bmatrix}}_{=:\Bwb} +\Lg \underbrace{\begin{bmatrix}\Dd\Sd & \Sv\end{bmatrix}}_{=:\Dwb}\Big)\bar{\w}(t), \eqnlabel{Bwb}\\
    \zerr(t) = & \Cz\xerr(t) .
\end{align} \eqnlabel{scaled_obs_err}
\end{subequations}
The transfer function of the system \eqn{scaled_obs_err} is given by
\begin{align*}
    \Gwz := & \Cz\big(s\I{\nx}-\A-\Lg\Cy\big)^{-1} \big(\Bwb +\Lg\Dwb\big),
    % \Big(\begin{bmatrix}\Bd\Sd & \vo{0}\end{bmatrix} +\Lg\begin{bmatrix}\Dd\Sd & \Sv\end{bmatrix}\Big), \eqnlabel{Gwz}
\end{align*}
where, $s$ is the complex variable.

\subsection{Sensor precision}
\response{In this work, we model $\bar{\vo{d}}(t)$ and $\bar{\n}(t)$ as either zero-mean stationary stochastic processes, or norm bounded signals.}

\response{First, let us consider the case when $\bar{\vo{d}}(t)$ and $\bar{\n}(t)$ are power signals modeled as zero-mean stationary stochastic processes. Let us define auto-correlation matrix of $\bar{\n}(t)$ as $\bar{\vo{\Sigma}}_n(\tau):=\E{\bar{\n}(t+\tau)\bar{\n}^T(t)}$, where $\E{\cdot}$ denotes the expectation operator. Since individual sensor channels are independent of each other,  $$\bar{\vo{\Sigma}}_n(\tau):=\diag(\bar{\sigma}^2_1(\tau), \bar{\sigma}^2_2(\tau), \cdots, \bar{\sigma}^2_{\ny}(\tau))$$ is a diagonal matrix.
Using \eqn{alp2_def}, the auto-correlation matrix of $\n(t) = \Sv\bar{\n}(t)$ becomes
$$\vo{\Sigma}_n(\tau):= \Sv\bar{\vo{\Sigma}}_n(\tau)\Sv^T = \diag\left(\frac{\bar{\sigma}^2_1(\tau)}{\kappa_1^2}, \cdots, \frac{\bar{\sigma}^2_{\ny}(\tau)}{\kappa_{\ny}^2}\right).$$
The signal variance or power of $i^{\text{th}}$ sensor noise channel is given by $\bar{\sigma}^2_i(0)/\kappa_i^2$. Therefore, the precision of $i^{\text{th}}$ sensor channel, which is defined to be the inverse of the signal variance, becomes $\kappa_i^2/\bar{\sigma}^2_i(0) = \kappa_i^2$, where, without loss of generality, we assume $\bar{\sigma}^2_i(0)=1$. Therefore, $\vo{\kappa}^2$ is the precision vector.}
% \comment{RB: Need to define $\vo{\kappa}^2$ to denote the precision vector.}

\response{Now, let us consider another case when $\bar{\vo{d}}(t)$ and $\bar{\n}(t)$ are norm bounded but arbitrary signals. Let $\bar{\n}_i(t)$ denote the $i^{\text{th}}$ component of the noise vector $\bar{\n}(t)$. Therefore, using  \eqn{alp2_def}, $\mathcal{L}_2$-norm of the $i^{\text{th}}$ component of the noise vector $\n(t) = \Sv\bar{\n}(t)$ becomes $\norm{\n_i(t)}{2} =  \norm{\bar{\n}_i(t)}{2}/\kappa_i$. In this case, we define sensor precision as the inverse of square of $\mathcal{L}_2$-norm of the noise signal, i.e. the precision of $i^{\text{th}}$ sensor channel is defined as $1/\norm{\n_i(t)}{2}^2 = \kappa_i^2/\norm{\bar{\n}_i(t)}{2}^2 = \kappa_i^2$, where, without loss of generality, we assume $\norm{\bar{\n}_i(t)}{2}=1$. Again, $\vo{\kappa}^2$ is the precision vector.} % \comment{RB: Need to define $\vo{\kappa}^2$ to denote the precision vector.}

\subsection{Observer design problem}
% Since $\vo{\kappa}^2$ is interpreted as precision vector, the optimal precision can be determined by minimizing a suitable cost which is a function of $\vo{\kappa}^2$.
\response{A sparse sensor configuration can be obtained by making $\vo{\kappa}^2$ sparse, since a sensor with zero precision is equivalent to removing that sensor from the system. Ideally, minimizing $\norm{\vo{\kappa^2}}{0}$, i.e. number of non-zero elements in $\vo{\kappa}^2$, will yield the sparsest sensor configuration. Minimization of $\norm{\cdot}{0}$ is a non-convex problem and the computational cost can be very high. Moreover, $\norm{\cdot}{0}$ does not penalize the magnitude of elements of $\vo{\kappa}^2$, therefore, as shown in \S \ref{sec:ex}, the sensor precision to realize the sparsest configuration can be prohibitively large.}

\response{Generally, higher sensor precision implies higher economic cost. Therefore, economic cost can be characterized by $l_1$-norm of $\vo{\kappa}^2$, i.e. $\norm{\vo{\kappa^2}}{1}$. Moreover, minimization of $l_1$-norm promotes sparsity in the sensor configuration. Therefore, in the proposed approach where sensor precision $\vo{\kappa}^2$ is treated as an unknown variable, we minimize $\norm{\vo{\kappa^2}}{1}$,  or in a general setting, weighted $l_1$-norm of $\vo{\kappa}^2$. For any arbitrary vector $\vo{\beta}\in\Real^N$, its weighted $l_1$-norm is defined as $$\norm{\vo{\beta}}{1,\vo{\rho}} := \sum_{i=1}^{N} \rho_i|\beta_i|$$
where $\vo{\rho}:=[\rho_1,\rho_2,\cdots,\rho_{N}]^T\geq 0$ are pre-defined weights.}

We are interested in determining the sparse set of sensors and associated minimum precision to design the observer given by \eqn{obs}, such that the effect of $\bar{\w}$ on $\zerr$ is minimal.
% Since $\Sv$ multiplies the sensor channels directly, larger $\Sv$ translates to the sensor noise with larger $\norm{\n(t)}{2}$. It follows from \eqn{alp2_def} that $\vo{\kappa}$ is directly related to the sensor precision. Therefore, sparse set of sensors can be obtained by determining a sparse $\vo{\kappa}$.
Therefore, the $\Htwo$ optimal observer design problem for a given attenuation level $\gamma >0$, is then stated as:
\begin{align}
    \text{determine optimal $\vo{\kappa}$ and $\Lg$ s.t. } \norm{\Gwz}{2} < \gamma. \eqnlabel{h2_prob}
\end{align}
Similarly, the $\Hinf$ optimal observer design problem for a given $\gamma >0$ is:
\begin{align}
     \text{determine optimal $\vo{\kappa}$ and $\Lg$ s.t. } \norm{\Gwz}{\infty} < \gamma. \eqnlabel{hinf_prob}
\end{align}
In \eqn{h2_prob} and \eqn{hinf_prob}, `optimal $\vo{\kappa}$ and $\Lg$' minimize $\norm{\vo{\kappa^2}}{1,\vo{\rho}}$, which serves a dual purpose as discussed above. Next, we formally present the solution of observer design problems as theorems.

\section{Sparse $\Htwo/\Hinf$ observers} \label{sec:thms}
\subsection{Main result}
The following theorem solves the $\Htwo$ optimal observer design problem with sparse sensing.

\begin{theorem} \label{thm:h2}
The solution of sparse $\Htwo$ observer design problem \eqn{h2_prob} is determined by solving the following optimization problem. The solution is given by $\vo{\kappa} = \vo{\beta}^{1/2}$, and $\Lg = \X^{-1}\Y$.

\begin{equation}\left.
\begin{aligned}
  & \min\limits_{\Y,\Q>0,\X>0,\vo{\beta}>0}\quad  \norm{\vo{\beta}}{1,\vo{\rho}}  \quad \text{ such that }  \\ %   \vo{\rho}^T\vo{\beta}
  & \M_{11}=\big(\X\A+\Y\Cy\big)+\big(\X\A+\Y\Cy\big)^T, \\
 &\M_{12}=\X\Bd\Sd  +\Y\Dd\Sd, \\
&\begin{bmatrix} \M_{11} & \M_{12} & \Y \\
                    \M_{12}^T & -\I{\nd} & \vo{0} \\
                   \Y^T & \vo{0} & -\diag(\vo{\beta}) \end{bmatrix} < 0,\\
 &\begin{bmatrix} -\Q & \Cz \\ \Cz^T & -\X \end{bmatrix} < 0, \\
 & \trace{\Q}  <\gamma^2 .
\end{aligned}\right\}\eqnlabel{h2_thm}
\end{equation}
\end{theorem}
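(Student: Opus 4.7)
My plan is to recast the $\Htwo$ performance constraint $\norm{\Gwz}{2}<\gamma$ on the error dynamics \eqn{scaled_obs_err} as a matrix inequality that is jointly affine in the decision variables $(\X,\Y,\Q,\vo{\beta})$. The reformulation relies on two linearizing substitutions, $\Y:=\X\Lg$ and $\vo{\beta}:=\vo{\kappa}^2$, together with two Schur complements; once the feasibility set is an LMI, the weighted $l_1$-objective is linear in $\vo{\beta}$ and the whole problem is convex.

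First I would invoke the standard Lyapunov characterization of the $\Htwo$ norm. Let $\A_{obs}:=\A+\Lg\Cy$ and $\B_{obs}:=\Bwb+\Lg\Dwb$. Then $\norm{\Gwz}{2}<\gamma$ holds iff there exist $P>0$ and $\Q$ with $\A_{obs}P+P\A_{obs}^T+\B_{obs}\B_{obs}^T<0$, $\Cz P\Cz^T<\Q$, and $\trace{\Q}<\gamma^2$. Setting $\X:=P^{-1}>0$ and congruence-multiplying the Lyapunov inequality by $\X$ on both sides produces $\X\A_{obs}+\A_{obs}^T\X+\X\B_{obs}\B_{obs}^T\X<0$. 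The substitution $\Y:=\X\Lg$ then makes the drift block affine: $\X\A_{obs}+\A_{obs}^T\X=(\X\A+\Y\Cy)+(\X\A+\Y\Cy)^T=\M_{11}$. Using the partition $\Bwb+\Lg\Dwb=\begin{bmatrix}(\Bd+\Lg\Dd)\Sd & \Lg\Sv\end{bmatrix}$ and the identity $\Sv\Sv^T=\diag(\vo{\beta})^{-1}$ from \eqn{alp2_def}, the quadratic part decouples as $\X\B_{obs}\B_{obs}^T\X=\M_{12}\M_{12}^T+\Y\diag(\vo{\beta})^{-1}\Y^T$, so the Lyapunov condition reduces to $\M_{11}+\M_{12}\M_{12}^T+\Y\diag(\vo{\beta})^{-1}\Y^T<0$.

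The heart of the argument is the next Schur complement. The displayed inequality is equivalent --- by Schur complementing against the block-diagonal matrix $\diag(-\I{\nd},-\diag(\vo{\beta}))$ --- to the $3\times 3$ block LMI stated in \eqn{h2_thm}. The essential observation is to pivot against $-\diag(\vo{\beta})$ rather than against $-\diag(\vo{\beta})^{-1}$: a naive elimination would leave the term $\diag(\vo{\beta})^{-1}$ in the LMI, which is \emph{nonconvex} in $\vo{\beta}$, whereas this particular arrangement pushes the dependence on $\vo{\beta}$ into a block that is affine in $\vo{\beta}$. A second Schur complement applied to $\Cz\X^{-1}\Cz^T<\Q$, valid since $\X>0$, delivers the second block LMI of \eqn{h2_thm}. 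Together with $\trace{\Q}<\gamma^2$, these LMIs exactly characterize the $\Htwo$-feasible pairs $(\Lg,\vo{\kappa})$.

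Finally, minimizing the weighted $l_1$-norm $\norm{\vo{\beta}}{1,\vo{\rho}}$ --- linear in $\vo{\beta}\geq 0$ --- over this LMI-feasible set is a convex program, and any optimizer yields the observer gain $\Lg=\X^{-1}\Y$ together with the precision vector $\vo{\kappa}=\vo{\beta}^{1/2}$. The main obstacle, and the reason the formulation is nontrivial, is the Schur-complement pivoting just described: absent that manoeuvre, the sensor-noise quadratic form $\Y\diag(\vo{\beta})^{-1}\Y^T$ is nonconvex in $\vo{\beta}$, and the joint design of observer gain and sensor precision would not admit a tractable convex reformulation.
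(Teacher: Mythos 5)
Your proposal is correct and follows essentially the same route as the paper: the Lyapunov/trace characterization of $\norm{\Gwz}{2}<\gamma$, the congruence by $\P^{-1}$ with the substitutions $\X=\P^{-1}$, $\Y=\X\Lg$, the decomposition of the noise quadratic form into $\M_{12}\M_{12}^T+\Y\,\Sv\Sv^T\Y^T$ with $\Sv\Sv^T=\diag(\vo{\beta})^{-1}$, and the two Schur complements yielding \eqn{h2_lmi} and \eqn{h2_trace_lmi}. Your explicit remark about pivoting so that $\vo{\beta}$ (rather than its inverse) appears affinely in the LMI is exactly the mechanism the paper uses, merely stated more emphatically.
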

\begin{proof}
The condition $\norm{\Gwz}{2} < \gamma$ in \eqn{h2_prob} is equivalent to the existence of a symmetric matrix $\P>0$ such that \cite{lmiCSys}
\begin{subequations}
\begin{align}
   &\big(\A+\Lg\Cy\big)\P  + \P \big(\A+\Lg\Cy\big)^T \nonumber \\ & \quad \quad  + \big(\Bwb +\Lg\Dwb\big) \big(\Bwb +\Lg\Dwb\big)^T < 0, \eqnlabel{ineq_ap} \\
   &\trace{\Cz\P\Cz^T}  < \gamma^2 . \eqnlabel{ineq_trace}
\end{align}
\end{subequations}
Pre- and post-multiplying \eqn{ineq_ap} by $\P^{-1}$gives
\begin{align}
   &\P^{-1}\big(\A+\Lg\Cy\big)  + \big(\A+\Lg\Cy\big)^T\P^{-1} \nonumber \\ & \quad \quad  + \P^{-1}\big(\Bwb +\Lg\Dwb\big) \big(\Bwb +\Lg\Dwb\big)^T\P^{-1} < 0. \nonumber
\end{align}
Let us substitute $\X:=\P^{-1}$, and $\Y:=\X\Lg$ in the previous equation to get
\begin{align}
    & \big(\X\A+\Y\Cy\big)+\big(\X\A+\Y\Cy\big)^T + \nonumber \\
    & \quad \quad  + \big(\X\Bwb +\Y\Dwb\big) \big(\X\Bwb +\Y\Dwb\big)^T < 0. \eqnlabel{h2_lmi_1}
\end{align}
 Using the definitions of $\Bwb$ and $\Dwb$ from \eqn{Bwb}, and defining
 $\M_{11}:=\big(\X\A+\Y\Cy\big)+\big(\X\A+\Y\Cy\big)^T$, and $\M_{12}:=\X\Bd\Sd  +\Y\Dd\Sd$, inequality \eqn{h2_lmi_1} can be written as
\begin{align}
     &\M_{11} + \begin{bmatrix}\M_{12} &  \Y\Sv\end{bmatrix}  \begin{bmatrix}\M_{12}^T\\  \Sv^T\Y^T\end{bmatrix}  < 0 \nonumber
\end{align}
\begin{align}
    \text{or, }\ &\M_{11}+\begin{bmatrix}\M_{12}  &  \Y\end{bmatrix}  \begin{bmatrix}\I{\nd} & \vo{0}\\ \vo{0} & \Sv\Sv^T\end{bmatrix} \begin{bmatrix}  \M_{12}^T \\ \Y^T\end{bmatrix} < 0. \nonumber % \left[ \quad \cdot \quad \right]^T %
\end{align}
% Then using Schur complement lemma, and equation \eqn{alp2_def}, the previous inequality becomes
% \begin{align*}
% \begin{bmatrix} \M_{11} & \M_{12} & \Y \\
%                     \M_{12}^T & -\I{\nd} & \vo{0} \\
%                   \Y^T & \vo{0} & -\diag(\vo{\kappa}^2) \end{bmatrix} < 0.
% \end{align*}
Then using Schur complement lemma, equation \eqn{alp2_def}, and the substitution $\vo{\beta}:=\vo{\kappa}^2$, the previous inequality becomes % Let us substitute $\vo{\beta}:=\vo{\kappa}^2$ to get
\begin{align}
\begin{bmatrix} \M_{11} & \M_{12} & \Y \\
                    \M_{12}^T & -\I{\nd} & \vo{0} \\
                   \Y^T & \vo{0} & -\diag(\vo{\beta}) \end{bmatrix} < 0. \eqnlabel{h2_lmi}
\end{align}

Now consider the inequality \eqn{ineq_trace}, which is equivalent to
\begin{align}
    \Cz\P\Cz^T -\Q < 0, \quad \trace{\Q} < \gamma^2 \nonumber
\end{align}
for a matrix $\Q>0$. Again using Schur complement lemma, and substituting $\P^{-1}=\X$, we get
\begin{align}
    \begin{bmatrix} -\Q & \Cz \\ \Cz^T & -\X \end{bmatrix} < 0,  \quad \trace{\Q} < \gamma^2 . \eqnlabel{h2_trace_lmi}
\end{align}
The set of inequalities given by \eqn{h2_lmi} and \eqn{h2_trace_lmi} define the LMI feasibility conditions for the problem \eqn{h2_prob}.
Therefore, the solution to the problem \eqn{h2_prob} is given by solving the optimization problem
$$\min\limits_{\Y,\Q>0,\X>0,\vo{\beta}>0}\quad  \norm{\vo{\beta}}{1,\vo{\rho}} \text{ subject to \eqn{h2_lmi}, \eqn{h2_trace_lmi} }.$$
% $$\min\limits_{\Y,\Q>0,\X>0,\vo{\beta}>0}\quad \norm{\vo{\beta}}{1} \text{ subject to \eqn{h2_lmi}, \eqn{h2_trace_lmi} }.$$
\end{proof}

We next present the result for solving the $\Hinf$ optimal observer design problem with sparse sensing.

\begin{theorem} \label{thm:hinf}
The solution of sparse $\Hinf$ observer design problem \eqn{hinf_prob} is determined by solving the following optimization problem. The solution is given by $\vo{\kappa} = \gamma^{-1/2}\vo{\beta}^{1/2}$, and $\Lg = \X^{-1}\Y$.

\begin{equation}\left.
\begin{aligned}
& \min\limits_{\Y,\X>0,\vo{\beta}>0}\quad \norm{\vo{\beta}}{1,\vo{\rho}} \quad  \text{ such that } \\ % \norm{\vo{\beta}}{1}
  & \M_{11}=\big(\X\A+\Y\Cy\big)+\big(\X\A+\Y\Cy\big)^T, \\
 &\M_{12}=\X\Bd\Sd  +\Y\Dd\Sd, \\
&\begin{bmatrix} \M_{11} & \M_{12} & \Cz^T & \Y \\
               \M_{12}^T & -\gamma\I{\nd} & \vo{0}    & \vo{0} \\
                \Cz &  \vo{0}    & -\gamma\I{\nz}     & \vo{0} \\
               \Y^T &  \vo{0}    & \vo{0} & -\diag(\vo{\beta}) \end{bmatrix} < 0. \\
\end{aligned} \right\}\eqnlabel{hinf_thm}
\end{equation}
\end{theorem}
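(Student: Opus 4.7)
The plan is to parallel the argument used for Theorem \ref{thm:h2}, but to invoke the Bounded Real Lemma for $\Hinf$ in place of the $\Htwo$ Lyapunov/trace conditions. Concretely, $\norm{\Gwz}{\infty} < \gamma$ is equivalent to the existence of $\P > 0$ satisfying the BRL inequality
\begin{align*}
\begin{bmatrix}
(\A+\Lg\Cy)\P + \P(\A+\Lg\Cy)^T & \Bwb+\Lg\Dwb & \P\Cz^T \\
(\Bwb+\Lg\Dwb)^T & -\gamma\I{\nw} & \vo{0} \\
\Cz\P & \vo{0} & -\gamma\I{\nz}
\end{bmatrix} < 0,
\end{align*}
for the closed-loop error dynamics \eqn{scaled_obs_err}. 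I would take this as the starting point.

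Next, I would linearize the unknowns exactly as in the $\Htwo$ proof: apply the congruence transformation $\diag(\P^{-1},\I{\nw},\I{\nz})$ to the BRL matrix, and set $\X := \P^{-1}$, $\Y := \X\Lg$. This turns the $(1,1)$ block into $\M_{11} = \textbf{sym}(\X\A + \Y\Cy)$, the $(1,2)$ block into $\X\Bwb + \Y\Dwb$, and the $(1,3)$ block into $\Cz^T$. Using the partitioning $\Bwb = [\Bd\Sd\; \vo{0}]$ and $\Dwb = [\Dd\Sd\; \Sv]$ from \eqn{Bwb}, the $(1,2)$ block becomes $[\M_{12} \; \Y\Sv]$ and the central $-\gamma\I{\nw}$ splits as $\diag(-\gamma\I{\nd},\,-\gamma\I{\nv})$.

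The key remaining step, which is the main obstacle since $\Y\Sv$ is still bilinear in the unknowns $\Y$ and $\Sv$, is a second congruence transformation with $\diag(\I{\nx}, \I{\nd}, \Sv^{-1}, \I{\nz})$. Because $\Sv$ is diagonal, this cleanly converts the off-diagonal $\Y\Sv$ block into $\Y$ (linear in the decision variables), while the corresponding diagonal block becomes $\Sv^{-1}(-\gamma\I{\nv})\Sv^{-T} = -\gamma(\Sv\Sv^T)^{-1} = -\gamma\,\diag(\vo{\kappa}^2)$ by \eqn{alp2_def}. Introducing the change of variable $\vo{\beta} := \gamma\vo{\kappa}^2$, equivalently $\vo{\kappa} = \gamma^{-1/2}\vo{\beta}^{1/2}$, absorbs the scalar $\gamma$ into $\vo{\beta}$ and yields the diagonal block $-\diag(\vo{\beta})$, with $\diag(\vo{\beta})>0$ enforcing $\Sv > 0$.

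Finally, I would reorder the rows and columns to place the newly created $\Y$/$\Y^T$/$-\diag(\vo{\beta})$ block as the last row/column, producing precisely the LMI displayed in \eqn{hinf_thm}. Since each step (BRL equivalence, congruence transformation, diagonal change of variable, symmetric permutation) preserves definiteness, the resulting feasibility problem is equivalent to \eqn{hinf_prob}. Minimizing $\norm{\vo{\beta}}{1,\vo{\rho}}$ over the LMI then yields the sparse minimum-precision $\Hinf$ observer, with $\Lg = \X^{-1}\Y$ recovered from the change of variables as in Theorem \ref{thm:h2}.
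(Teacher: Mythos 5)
Your proof is correct and reaches exactly the LMI of \eqn{hinf_thm}; the essential ingredients are the same as the paper's (Bounded Real Lemma for the error dynamics \eqn{scaled_obs_err}, the linearizing substitutions $\X=\P^{-1}$, $\Y=\X\Lg$, the identity $(\Sv\Sv^T)^{-1}=\diag(\vo{\kappa}^2)$ from \eqn{alp2_def}, and the rescaling $\vo{\beta}=\gamma\vo{\kappa}^2$). Two points of divergence in the algebra are worth noting. First, the paper starts from the $-\gamma^2\I{}$ normalization of the BRL with $\Cz^T\Cz$ absorbed into the $(1,1)$ block; this forces a final cleanup step, namely a congruence by $\F=\diag\big(\gamma^{-1/2}\I{\nx},\gamma^{-1/2}\I{\nd},\gamma^{1/2}\I{\nz},\gamma^{-1/2}\I{\nv}\big)$ together with relabeled variables $\X'=\gamma^{-1}\X$, $\Y'=\X'\Lg$, to reach the $-\gamma\I{}$ form appearing in the theorem. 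By starting from the $-\gamma\I{}$ normalization you bypass that step entirely, and $\Lg=\X^{-1}\Y$ holds for the variables of the final LMI without renaming. Second, the paper linearizes the bilinear block $\Y\Sv$ by collapsing the LMI to Riccati form via Schur complement, factoring out $\R=\diag\big(\gamma^{-2}\I{\nd},\I{\nz},\gamma^{-2}\Sv\Sv^T\big)$, and re-expanding; your single congruence by $\diag\big(\I{\nx},\I{\nd},\Sv^{-1},\I{\nz}\big)$ accomplishes the same thing in one move, and is valid because $\vo{\beta}>0$ guarantees $\Sv$ is invertible so the transformation is a genuine congruence. Both routes land on the identical LMI, so the difference is one of bookkeeping rather than substance, though your derivation is arguably the tidier of the two.
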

\begin{proof}
The condition $\norm{\Gwz}{\infty} < \gamma$ in \eqn{hinf_prob} is equivalent to the existence of a symmetric matrix $\X>0$ such that \cite{Hinf1997paper}
\begin{align}
    \begin{bmatrix} \textbf{sym}\big(\X\A+\X\Lg\Cy\big) +  \Cz^T\Cz & \X\big(\Bwb +\Lg\Dwb\big)  \\
                    \responseThree{\big(\Bwb +\Lg\Dwb\big)^T\X} & -\gamma^2 \I{(\nd+\nv)} \end{bmatrix} < 0. \nonumber
\end{align}
Define $\Y:=\X\Lg$ to get
\begin{align}
    \begin{bmatrix} \textbf{sym}\big(\X\A+\Y\Cy\big) + \Cz^T\Cz & \X\Bwb +\Y\Dwb  \\
                    \big(\X\Bwb +\Y\Dwb\big)^T & -\gamma^2 \I{(\nd+\nv)} \end{bmatrix} < 0 .\nonumber
\end{align}
Using Schur complement lemma, it can be written as
\begin{align}
   & \big(\X\A+\Y\Cy\big) + \big(\X\A+\Y\Cy\big)^T + \Cz^T\Cz + \nonumber \\ &\big(\X\Bwb +\Y\Dwb\big)\gamma^{-2} \I{(\nd+\nv)} \big(\X\Bwb +\Y\Dwb\big)^T  < 0. \eqnlabel{hinf_lmi_1}
\end{align}
Using the definitions of $\Bwb$ and $\Dwb$ from \eqn{Bwb}, and defining
 $\M_{11}:=\textbf{sym}\big(\X\A+\Y\Cy\big),$ and $\M_{12}:=\X\Bd\Sd  +\Y\Dd\Sd,$ the inequality \eqn{hinf_lmi_1} becomes

% Using the definitions of $\Bwb$ and $\Dwb$ from \eqn{Bwb}, and combining the last two quadratic terms, the previous inequality becomes
%  \begin{align} &\footnotesize{\big(\X\A+\W\Cy\big) + \big(\X\A+\W\Cy\big)^T +} \nonumber \\
% & \footnotesize{\begin{bmatrix}\big(\X\Bd\Sd+\W\Dd\Sd\big) & \Cz^T & \W\end{bmatrix} \R \begin{bmatrix} \begin{pmatrix}\X\Bd\Sd \\+\W\Dd\Sd\end{pmatrix}^T \\ \\ \Cz \\ \W^T\end{bmatrix} < 0}  % \begin{bmatrix}\big(\X\Bd\Sd+\W\Dd\Sd\big)^T\\ \Cz \\ \W^T\end{bmatrix} \nonumber
% \end{align}

 \begin{align} &\M_{11} +
 \begin{bmatrix}\M_{12} & \Cz^T & \Y\end{bmatrix} \R \begin{bmatrix} \M_{12}^T \\ \Cz \\ \Y^T\end{bmatrix} < 0 , \nonumber
\end{align}
where, $\R:= \diag\left(\gamma^{-2}\I{\nd}, \ \I{\nz}, \ \gamma^{-2}\Sv\Sv^T \right)$.
% \begin{align}
%     \R:= \begin{bmatrix}\gamma^{-2}\I{\nd}&\vo{0}&\vo{0} \\
%                 \vo{0}& \I{\nz} & \vo{0} \\
%                 \vo{0}&\vo{0} & \gamma^{-2}\Sv\Sv^T \end{bmatrix}. \nonumber
% \end{align}
Using \eqn{alp2_def}, and Schur complement lemma again, we get

\begin{align} \begin{bmatrix} \M_{11} & \M_{12} & \Cz^T & \Y \\
               \M_{12}^T & -\gamma^2 \I{\nd} & \vo{0}    & \vo{0} \\
                \Cz &  \vo{0}    & -\I{\nz}     & \vo{0} \\
                \Y^T &  \vo{0}    & \vo{0} & -\gamma^2\diag(\vo{\kappa}^2) \end{bmatrix} < 0. \eqnlabel{hinf_lmi1}
\end{align}
Define $\X':=\gamma^{-1}\X$, $\Y':=\X'\Lg$,
 $\M_{11}':=\textbf{sym}\big(\X'\A+\Y'\Cy\big)$, $\M_{12}':=\X'\Bd\Sd  +\Y'\Dd\Sd$, and
$$\F:=\diag\Big(\frac{1}{\sqrt{\gamma}}\I{\nx}, \ \frac{1}{\sqrt{\gamma}}\I{\nd}, \ \sqrt{\gamma}\I{\nz}, \  \frac{1}{\sqrt{\gamma}}\I{\nv} \Big).$$
% \begin{align}
% \F:=\begin{bmatrix}\frac{1}{\sqrt{\gamma}}\I{\nx} & \vo{0} & \vo{0}& \vo{0}\\  \vo{0} &\frac{1}{\sqrt{\gamma}}\I{\nd}& \vo{0}& \vo{0}\\  \vo{0} & \vo{0} &\sqrt{\gamma}\I{\nz} & \vo{0}\\  \vo{0} & \vo{0}& \vo{0} &\frac{1}{\sqrt{\gamma}}\I{\nv}\end{bmatrix}.
% \end{align}
% Then pre- and post-multiply \eqn{hinf_lmi1} by $\F$ and $\F^T$ respectively, i.e.
% \begin{align}\F \begin{bmatrix} \M_{11} & \M_{12} & \Cz^T & \Y \\
%               \M_{12}^T & -\gamma^2 \I{\nd} & \vo{0}    & \vo{0} \\
%                 \Cz &  \vo{0}    & -\I{\nz}     & \vo{0} \\
%                 \Y^T &  \vo{0}    & \vo{0} & -\gamma^2\diag(\vo{\kappa}^2) \end{bmatrix} \F^T < 0,\nonumber
% \end{align}
% which gives
% \begin{align} \begin{bmatrix} \M_{11}' & \M_{12}' & \Cz^T & \Y' \\
%               \left(\M_{12}'\right)^T & -\gamma \I{\nd} & \vo{0}    & \vo{0} \\
%                 \Cz &  \vo{0}    & -\gamma\I{\nz}     & \vo{0} \\
%                 \left(\Y'\right)^T &  \vo{0}    & \vo{0} & -\gamma\,\diag(\vo{\kappa}^2) \end{bmatrix} < 0. \nonumber
% \end{align}
%  We can substitute $\vo{\beta}=\gamma\vo{\kappa}^2$ to get
Then pre- and post-multiply \eqn{hinf_lmi1} by $\F$ and $\F^T$ respectively, and substitute $\vo{\beta}=\gamma\vo{\kappa}^2$ to get
\begin{align} \begin{bmatrix} \M_{11}' & \M_{12}' & \Cz^T & \Y' \\
               \left(\M_{12}'\right)^T & -\gamma \I{\nd} & \vo{0}    & \vo{0} \\
                \Cz &  \vo{0}    & -\gamma\I{\nz}     & \vo{0} \\
                \left(\Y'\right)^T &  \vo{0}    & \vo{0} & -\diag(\vo{\beta}) \end{bmatrix} < 0. \eqnlabel{hinf_lmi}
\end{align}
Clearly, inequality \eqn{hinf_lmi} is equivalent to the one in \eqn{hinf_thm}.
Similar to Theorem \ref{thm:h2}, the solution to sparse $\Hinf$ observer is determined by minimizing the weighted $l_1$-norm $\norm{\vo{\beta}}{1,\vo{\rho}}$ subject to \eqn{hinf_lmi}, which concludes the proof.
\end{proof}

\subsection{Iterative refinement} \label{sec:iterate}
\responseOne{In general, solving semi-definite programs (SDPs) given by \eqn{h2_thm} or \eqn{hinf_thm} does not result in exactly sparse $\vo{\beta}$, i.e. some elements of $\vo{\beta}$ would be relatively small but not exactly zero. However, iterative techniques  with weighted $l_1$-norm minimization \cite{boyd2008weightedL1,Jovanovic2014cdc} can be employed to ensure that the elements of $\vo{\beta}$ are close to zero within specified tolerance. % ,boyd2007portfolio,cvx2006weightedL1
To achieve the sparse configuration, \eqn{h2_thm} and \eqn{hinf_thm} are solved multiple times and weights are updated as $\rho^{(k+1)}_i = (\epsilon+\lambda|\beta_i^{(k)}|)^{-1}$, where $\beta_i^{(k)}$is the solution at the end of $k^{\text{th}}$ iteration, a small number $\epsilon >0$ and a constant $\lambda > 0$ are used to ensure that the weights are well-defined at each step. Initial weights are chosen to be unity, i.e. $\rho^{(0)}_i = 1$.}% for $i=1,\cdots,\ny$.

\responseOne{Similar to \cite{Jovanovic2016EJC,Jovanovic2019TAC}, once we have the sparse structure, the final refined or polished solution is obtained by removing the sensor channels with small precision and re-solving \eqn{h2_thm} or \eqn{hinf_thm} with unit weights, i.e. $\rho_i = 1$.}

\responseTwo{Solutions of SDPs, in general, do not scale well as dimension of the problem is increased. Solution algorithms based on proximal gradient method or ADMM \cite{boyd2011admm} such as presented in \cite{Jovanovic2019TAC, Jovanovic2014cdc, Jovanovic2018cdc} might provide an efficient and scalable alternative for solving such problems. However, development of such customized algorithms is out of the scope of this paper, and will be addressed in our future work}.

\subsection{Normalized system}
The control inputs, exogenous signals, and outputs of a plant are generally multiplied by weighting matrices for normalization. Such system with normalizing weights can be written as
\begin{subequations}
\begin{align}
    \xdot(t) &= \A\x(t) + \underbrace{\Bu\W_u}_{=:\tilde{\B}_u}\tilde{\u}(t) + \underbrace{\Bw\W_w}_{=:\tilde{\B}_w}\tilde{\w}(t), \\
    \y(t) &= \Cy\x(t) + \underbrace{\Du\W_u}_{=:\tilde{\D}_u}\tilde{\u}(t) + \underbrace{\Dw\W_w}_{\tilde{\D}_w}\tilde{\w}(t), \\
    \tilde{\z}(t) &= \underbrace{\W_z\Cz}_{=:\tilde{\C}_z}\x(t),
\end{align} \eqnlabel{sys_norm}
\end{subequations}
where, $\tilde{\u}$, $\tilde{\w}$, $\tilde{\z}$ are normalized vectors, % control inputs, exogenous signals, and outputs of interest respectively,
and $\W_u$, $\W_w$, and $\W_z$ are the corresponding weighting matrices. It is clear that the results of Theorems \ref{thm:h2} and \ref{thm:hinf} can be used for a system given by \eqn{sys_norm} with augmented system matrices $\tilde{\B}_u$, $\tilde{\B}_w$, $\tilde{\D}_u$, $\tilde{\D}_w$, and $\tilde{\C}_z$. Next, we consider an example to demonstrate the application of results presented in this section.

\subsection{Augmented cost function}
The results presented in Theorems \ref{thm:h2} and \ref{thm:hinf} are derived for a given value of $\gamma$. In practice, we are also concerned with determining the minimum level of attenuation $\gamma$. This can be done easily by augmenting the cost function as
\begin{align}\min \quad \norm{\vo{\beta}}{1,\vo{\rho}} + c \gamma \eqnlabel{aug_cost}\end{align}
where $c\geq0$ is a known weighting constant.
Needless to say, any linear constraints in terms of $\vo{\beta}$, e.g. upper bounds, can be easily incorporated in the optimization problem.

\section{Example}\label{sec:ex}
Let us consider the longitudinal model of an F-16 aircraft \cite{Stevens1992}. The states are velocity $V(ft/s)$, angle of attack $\alpha(rad)$, pitch angle $\theta(rad)$, and pitch rate $q(rad/s)$. The engine thrust force $F(lb)$ and elevator angle $\delta_e(deg)$ are the control inputs.
On board sensors measure body acceleration $\dot{u}\,(ft/s^2)$ along roll axis, body acceleration $\dot{w}\,(ft/s^2)$ along yaw axis, angle of attack $\alpha(rad)$, pitch rate $q(rad/s)$, and dynamic pressure $\bar{q}:=\rho_{atm} V^2/2\, (lb/ft^2)$, where $\rho_{atm}$ is the atmospheric density.
Therefore, state $\x$, control $\u$, and measured output $\y$ vectors are defined as
\begin{align*}\x &:=\begin{bmatrix}V & \alpha & \theta & q\end{bmatrix}^T,\\
\u &:= \begin{bmatrix}F & \delta_e\end{bmatrix}^T, \\
\y &:= \begin{bmatrix}\dot{u} & \dot{w} & \alpha & q & \bar{q} \end{bmatrix}^T.
\end{align*}

The \responseOne{dynamic equations} and outputs are non-linear functions of the states and controls.
The linearized model is obtained at an equilibrium or trim point for steady-level flight condition, with trim velocity $V^\ast=1000 \,ft/s$ at an altitude of $10,000 \,ft$. The states and controls at the trim point are
\begin{align*}
    \x^\ast &=\begin{bmatrix}1000, & -3.02\times10^{-3}, & -3.02\times10^{-3}, & 0\end{bmatrix}^T, \\
    \u^\ast &= \begin{bmatrix}6041.20 & -1.38\end{bmatrix}^T.
\end{align*}
System matrices for the linearized model, and weighting matrices are given in the appendix. We assume that the process noise enters the linearized plant due to fluctuations in the elevator setting. Therefore, $\Bd=\Bu\left[0\quad 1\right]^T$.
We also assume that $\Cz = \I{4}$, and since we are interested in determining scaling for sensors only, we set constant $\Sd = 1$.
% The objective here is to determine the sparsest set of sensors and their precision, and observer gain $\Lg$, which can estimate all four states with minimal effect due to process and sensor noise. To this end, we utilize the results from Theorems \ref{thm:h2} and \ref{thm:hinf}.  The optimization problems \eqn{h2_thm} and \eqn{hinf_thm} for an augmented system given by \eqn{sys_norm} are solved using the solver \texttt{SDPT3} \cite{sdpt3} with \texttt{CVX} \cite{cvx} as a parser.

Next, we utilize the results from Theorems \ref{thm:h2} and \ref{thm:hinf} to determine the sparse sensor configuration and their precision for the system under consideration represented by \eqn{sys_norm}.  The SDPs \eqn{h2_thm} and \eqn{hinf_thm} are solved using the solver \texttt{SDPT3} \cite{sdpt3} with \texttt{CVX} \cite{cvx} as a parser. \\
\begin{figure}[htb]
    \centering
    \includegraphics[trim=2.6cm 0.7cm 3.5cm 0.2cm,width=0.42\textwidth]{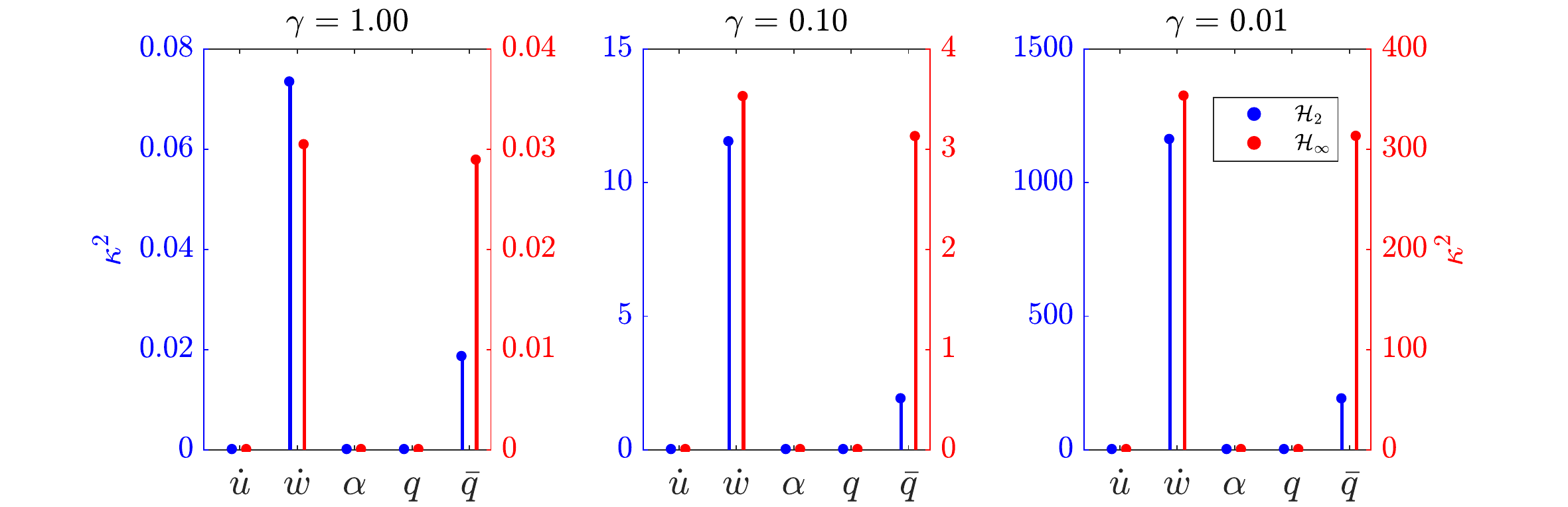}
    \caption{Iterative solution: $\Htwo$ (blue) and $\Hinf$ (red) optimal precision $\vo{\kappa}^2$, $V^{\ast} = 1000 f t/s$, $\norm{\vo{\kappa}^2}{0} = 2$.} \label{fig:polished_h2_hinf}
\end{figure}

\subsubsection*{Iterative solution} The optimization problems are solved iteratively as discussed in \S \ref{sec:iterate} for different values of $\gamma = 1,\,0.1,\,0.01$, with no bounds on $\vo{\kappa}^2$.
The refined sensor precision, $\kappa^2$, of different sensors obtained for $\Htwo$ ($\Hinf$) sparse observer design are shown in \fig{polished_h2_hinf} on left (right) y-axis in blue (red) color.
The precision $\kappa^2$ associated with $\dot{u}$, $\alpha$, and $q$ sensors is zero in all cases. Thus, it implies that, to design the $\Htwo$/$\Hinf$ observer for the plant under consideration, we need $\dot{w}$ and $\bar{q}$ sensors only, and $\norm{\vo{\kappa}^2}{0} = 2$. % In other words, we can remove the $\dot{u}$, $\alpha$, and $q$ sensors.
As one would expect, it can be observed from \fig{polished_h2_hinf} that the minimum required precision for sensors increases as the specified value of $\gamma$ is decreased. \\

\subsubsection*{Sensor configurations for different linearized models} The numerical results shown in \fig{polished_h2_hinf} are obtained for the trim velocity $V^{\ast} = 1000 f t/s$.
We performed numerical tests for different trim velocities ranging from $600 f t/s$ to $1600 f t/s$, and observed that the sparse sensor configuration is same in all cases, i.e. $\kappa^2$ is zero for sensors $\dot{u}$, $\alpha$, and $q$.
As one would expect, the values of non-zero $\kappa^2$ for sensors $\dot{w}$ and $\bar{q}$ are different
for different linearized plants. The optimal precision values for different linearized plants are shown
in Tables \ref{table:H2_diff_vel} and \ref{table:Hinf_diff_vel}. We can select sensors with maximum precision that will work
for all linearized plants. However, we also note that the scope of this paper is limited to linear time invariant (LTI) systems. Sparse sensing for non-linear systems is our future research focus.

\responseOne{We compare the iterative solution shown in \fig{polished_h2_hinf} with the solution obtained by exhaustive search, which is discussed next.} \\

%% H2 -- Variation with Vtrim %%
\begin{table}[h!]
\centering
\caption{Iterative solution of \eqn{h2_thm}: $\Htwo$ Optimal precision $\kappa^2$ obtained for linearized models at different trim velocities $V^{\ast}\ (ft/s)$.  }
\begin{tabular}{|c|c|c|c|c|c|c|}
 \hline
  $V^{\ast}$& $\gamma$ &  $\kappa^2_1\ (\dot{u})$ & $\kappa^2_2\ (\dot{w})$ & $\kappa^2_3\ (\alpha)$ & $\kappa^2_4\ (q)$ & $\kappa^2_5\ (\bar{q})$ \\
 \hline
  \multirow{3}{1.5em}{600}     & 1    & 0 & 0.0418  & 0 & 0 & 0.0117  \\
                              & 0.1  & 0 & 18.2490 & 0 & 0 & 2.9759  \\
                              & 0.01 & 0 & 1888.3814 & 0 & 0 & 295.7141 \\
 \hline
 \multirow{3}{1.5em}{800}      & 1    & 0 & 0.0686  & 0 & 0 & 0.0170  \\
                              & 0.1  & 0 & 15.1956 & 0 & 0 & 2.6149  \\
                              & 0.01 & 0 & 1539.7173 & 0 & 0 & 261.1855 \\
 \hline
  \multirow{3}{1.5em}{1000}    & 1    & 0 & 0.0733  & 0 & 0 & 0.0186 \\
                              & 0.1  & 0 & 11.5177 & 0 & 0 & 1.9002 \\
                              & 0.01 & 0 & 1160.0183 & 0 & 0 & 189.6549 \\
 \hline
  \multirow{3}{1.5em}{1200}   & 1    & 0 & 0.0739  & 0 & 0 & 0.0169 \\
                              & 0.1  & 0 & 9.8970 & 0 & 0 & 1.6838 \\
                              & 0.01 & 0 & 993.6162 & 0 & 0 & 168.2124\\
 \hline
\multirow{3}{1.5em}{1400}      & 1    & 0 & 0.0723  & 0 & 0 & 0.0154 \\
                              & 0.1  & 0 & 8.7747 & 0 & 0 & 1.5313 \\
                              & 0.01 & 0 & 879.5519 & 0 & 0 & 153.0527\\
 \hline
\multirow{3}{1.65em}{1600}      & 1    & 0 & 0.0699  & 0 & 0 & 0.0143 \\
                              & 0.1  & 0 & 7.9630 & 0 & 0 & 1.4194 \\
                              & 0.01 & 0 & 797.5027 & 0 & 0 & 141.9043\\
 \hline
\end{tabular}
\label{table:H2_diff_vel}
\end{table}

%% H inf -- Variation with Vtrim %%
\begin{table}[h!]
\centering
\caption{Iterative solution of \eqn{hinf_thm}: $\Hinf$ Optimal precision $\kappa^2$ obtained for linearized models at different trim velocities $V^{\ast}\ (ft/s)$.}
\begin{tabular}{|c|c|c|c|c|c|c|}
 \hline
  $V^{\ast}$& $\gamma$ &  $\kappa^2_1\ (\dot{u})$ & $\kappa^2_2\ (\dot{w})$ & $\kappa^2_3\ (\alpha)$ & $\kappa^2_4\ (q)$ & $\kappa^2_5\ (\bar{q})$ \\
 \hline
  \multirow{3}{1.5em}{600}     & 1    & 0 & 0.0071 & 0 & 0 & 0.0263 \\
                                & 0.1  & 0 & 6.2214 & 0 & 0 & 5.7935 \\
                                & 0.01 & 0 & 628.1772 & 0 & 0 & 582.4734 \\
 \hline
 \multirow{3}{1.5em}{800}       & 1    & 0 & 0.0305 & 0 & 0 & 0.0375 \\
                                & 0.1  & 0 & 6.6973 & 0 & 0 & 5.3845 \\
                                & 0.01 & 0 & 673.4359 & 0 & 0 & 540.0908 \\
 \hline
 \multirow{3}{1.5em}{1000}      & 1    & 0 & 0.0304 & 0 & 0 & 0.0289 \\
                                & 0.1  & 0 & 3.5216 & 0 & 0 & 3.1227 \\
                                & 0.01 & 0 & 352.6480 & 0 & 0 & 312.5102 \\
 \hline
  \multirow{3}{1.5em}{1200}     & 1    & 0 & 0.0302 & 0 & 0 & 0.0264 \\
                                & 0.1  & 0 & 3.3316 & 0 & 0 & 2.7850 \\
                                & 0.01 & 0 & 333.4735 & 0 & 0 & 278.6482 \\
 \hline
  \multirow{3}{1.5em}{1400}     & 1    & 0 & 0.0303 & 0 & 0 & 0.0246 \\
                                & 0.1  & 0 & 3.2458 & 0 & 0 & 2.5562 \\
                                & 0.01 & 0 & 324.8010 & 0 & 0 & 255.7163 \\
 \hline
  \multirow{3}{1.75em}{1600}     & 1    & 0 & 0.0307 & 0 & 0 & 0.0233 \\
                                & 0.1  & 0 & 3.2287 & 0 & 0 & 2.3957 \\
                                & 0.01 & 0 & 323.0274 & 0 & 0 & 239.6375 \\
 \hline
\end{tabular}
\label{table:Hinf_diff_vel}
\end{table}
% \begin{figure}[htb]
% % CAUTION: These eps plots are NOT 'polished' solutions
%     \centering
%     \includegraphics[trim=2.5cm 0cm 3cm 0.2cm,width=0.4\textwidth]{h2.pdf}
%     \caption{Sparse $\Htwo$ observer design with unbounded $\vo{\kappa}$: Optimal scaling $\kappa$ for different sensors.} \label{fig:h2}
% % \end{figure}
% \vspace{1cm}
% % \begin{figure}[t]
% %     \centering
%     \includegraphics[trim=2.5cm 0cm 3cm 0.2cm,width=0.4\textwidth]{hinf.pdf}
%     \caption{Sparse $\Hinf$ observer design with unbounded $\vo{\kappa}$: Optimal scaling  $\kappa$ for different sensors.}
%     \label{fig:hinf}
% \end{figure}

% \begin{table}[h!]
% % Data for figures h2_obs_compare and hinf_obs_compare. This is polished solution.
% \centering
% \caption{Optimal precision $\kappa^2$ for different sensors: Iterative solution of  \eqn{h2_thm} and \eqn{hinf_thm} }
% \begin{tabular}{|c|c|c|c|c|c|c|}
%  \hline
%   & $\gamma$ &  $\kappa^2_1\ (\dot{u})$ & $\kappa^2_2\ (\dot{w})$ & $\kappa^2_3\ (\alpha)$ & $\kappa^2_4\ (q)$ & $\kappa^2_5\ (\bar{q})$ \\
%  \hline
%   \multirow{3}{1.5em}{$\Htwo$} & 1    & 0 & 0.0733  & 0 & 0 & 0.0186 \\
%                               & 0.1  & 0 & 11.5177 & 0 & 0 & 1.9002 \\
%                               & 0.01 & 0 & 1160.0183 & 0 & 0 & 189.6549 \\
%  \hline
%   \multirow{3}{1.5em}{$\Hinf$} & 1    & 0 & 0.0304 & 0 & 0 & 0.0289 \\
%                                 & 0.1  & 0 & 3.5216 & 0 & 0 & 3.1227 \\
%                                 & 0.01 & 0 & 352.6480 & 0 & 0 & 312.5102 \\
%  \hline
% \end{tabular}
% \label{table:ex1}
% \end{table}

\subsubsection*{Exhaustive search}
\responseOne{The globally sparsest $\vo{\kappa}^2$ is determined via exhaustive search as follows. First assume that $\norm{\vo{\kappa}^2}{0} = r$. This gives us $5!/r!(5-r)!$ cases since there are 5 sensors. Problems \eqn{h2_thm} and \eqn{hinf_thm} are solved for each scenario retaining $r$ sensors and removing the other $5-r$, with $\rho_i = 1$. Select sensor configuration with the minimum $r$. If there are multiple feasible solutions for such $r$, then select one with the minimum $\norm{\vo{\kappa}^2}{1}$ as the optimal solution.}

\responseOne{The $\Htwo$ optimal sensor precision obtained via exhaustive search is identical to the iterative solution shown in \fig{polished_h2_hinf}. Therefore, for the system under consideration, the proposed approach produces the globally sparsest configuration with the least $\norm{\vo{\kappa}^2}{1}$ for the $\Htwo$ optimal observer.}
\begin{table}[h!]
% \vspace{-0.3cm}
\centering
\caption{Exhaustive search: $\Hinf$ optimal precision $\vo{\kappa}^2$, $V^{\ast} = 1000 f t/s$, $\norm{\vo{\kappa}^2}{0} = 1$.}
% \vspace{-0.3cm}
\begin{tabular}{|c|c|c|c|c|c|c|}
 \hline
  & $\gamma$ &  $\kappa^2_1\ (\dot{u})$ & $\kappa^2_2\ (\dot{w})$ & $\kappa^2_3\ (\alpha)$ & $\kappa^2_4\ (q)$ & $\kappa^2_5\ (\bar{q})$ \\
 \hline
%   \multirow{3}{1.5em}{$\Htwo$} & 1    & - & -  & - & - & - \\ % % Htwo infeasible,
%                               & 0.1  & - & -  & - & - & - \\
%                               & 0.01 & - & -  & - & - & - \\
%  \hline
   \multirow{3}{1.5em}{$\Hinf$} & 1    & 0 & 0 & 0 & 0 & 1.2922$\times 10^3$ \\
                                & 0.1  & 0 & 0 & 0 & 0 & 2.5533$\times 10^5$ \\
                                & 0.01 & 0 & 0 & 0 & 0 & 2.5662$\times 10^7$ \\
 \hline
\end{tabular}
\label{table:exhaust}
% \vspace{-0.25cm}
\end{table}

\responseOne{The $\Hinf$ optimal sensor precision obtained via exhaustive search is shown in Table \ref{table:exhaust}, which is evidently different from the iterative solution shown in \fig{polished_h2_hinf}. Note, the required precision or $\norm{\vo{\kappa}^2}{1}$ to realize  the configuration obtained by iterative solution (\fig{polished_h2_hinf}) is orders of magnitude smaller than the globally sparsest configuration (Table \ref{table:exhaust}).
This is due to the fact that the proposed framework minimizes the individual sensor precision while simultaneously promoting a sparse configuration, whereas the exhaustive search enforces the sparse configuration first and then determines the corresponding sensor precision. This also exposes a trade-off between a sparse configuration and the sensor precision required to realize it.}

\responseOne{Although we considered unbounded $\vo{\kappa}^2$ for the purpose of numerical experiments, in practice, there will be upper bounds on $\vo{\kappa}^2$ arising due to physical constraints. Upper bounds on $\vo{\kappa}^2$ (i.e. linear constraints on $\vo{\beta}$) can be easily incorporated in the optimization problems \eqn{h2_thm} and \eqn{hinf_thm}. Upper bounds on $\vo{\kappa}^2$ can avoid configurations with potentially prohibitive precision such as shown in Table \ref{table:exhaust}.}
\\

\subsubsection*{Simulation of error dynamics}
\responseTwo{To analyze the performance of observers, we simulate the error dynamics given in \eqn{obs_err} with non-zero initial condition. The scaled disturbances $\bar{\w}$ in \eqn{scale} are assumed to be mutually independent unit variance band-limited Gaussian stationary processes. A comparison of $\Htwo$ observers with sparse and full sensor configurations for $\gamma = 0.1$ is shown in \fig{h2_obs_compare}. The sensor precision for full configuration is obtained by solving \eqn{h2_thm} once with $\rho_i=1$, while sensor precision for sparse configuration is shown in \fig{polished_h2_hinf}. From \fig{h2_obs_compare}, we see that the performance of observers with sparse and full configurations are comparable. The reason is that, in full configuration, the sensor precision for $\dot{w}$ and $\bar{q}$ are very close to values  in \fig{polished_h2_hinf}, and the precision for $\dot{u}$, $\alpha$, $q$ sensors is order of magnitude of $10^{-10}$ (not shown here), i.e. sensor precision in sparse and full configurations are very similar.}

\responseTwo{A similar comparison for $\Hinf$ observers is shown in \fig{hinf_obs_compare}, and again, we see that the performance of observers with sparse and full configurations are comparable. }
\begin{figure}[htb]
    \centering
    \includegraphics[trim=1.8cm 0.5cm 3cm 0.4cm,width=0.38\textwidth]{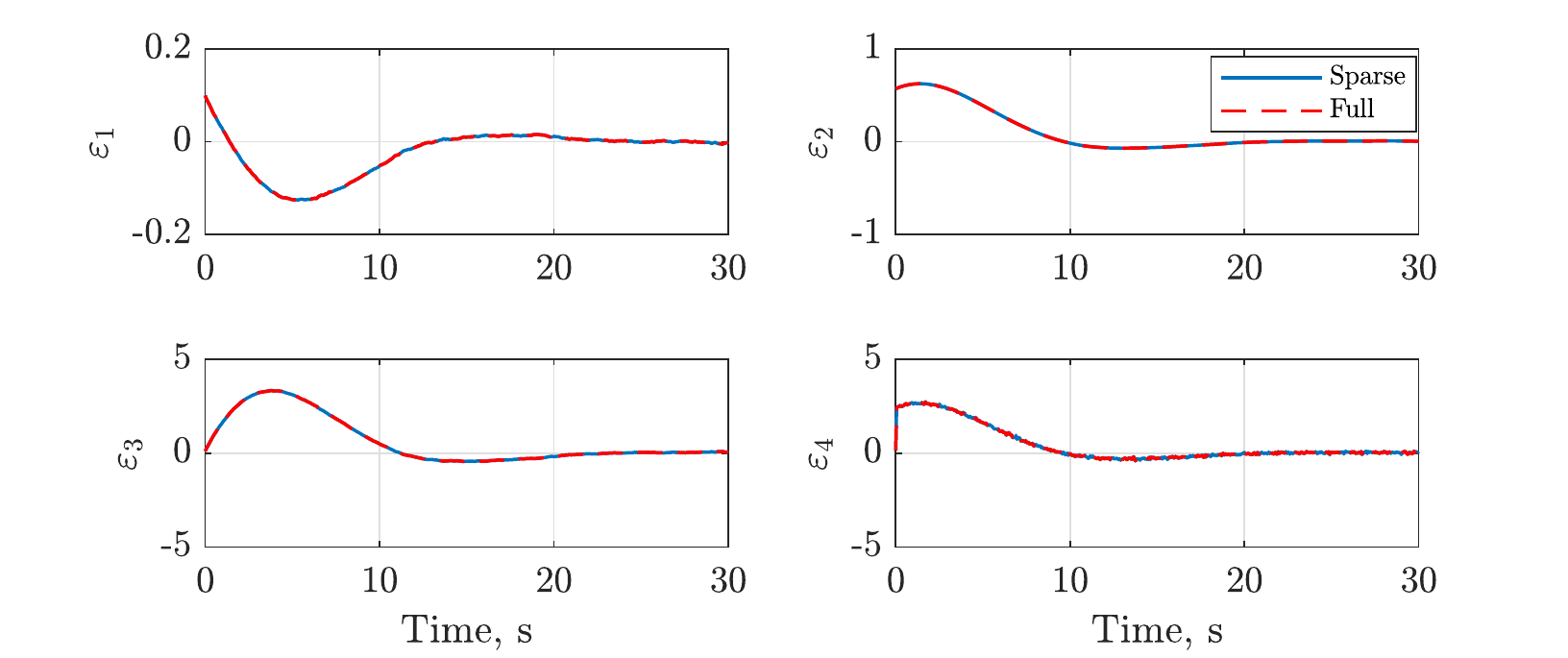}
    \caption{$\Htwo$ observer performance for sparse and full sensor configurations.} \label{fig:h2_obs_compare}
\end{figure}
\begin{figure}[htb]
    \centering
    \includegraphics[trim=1.8cm 0.5cm 3cm 0.6cm,width=0.38\textwidth]{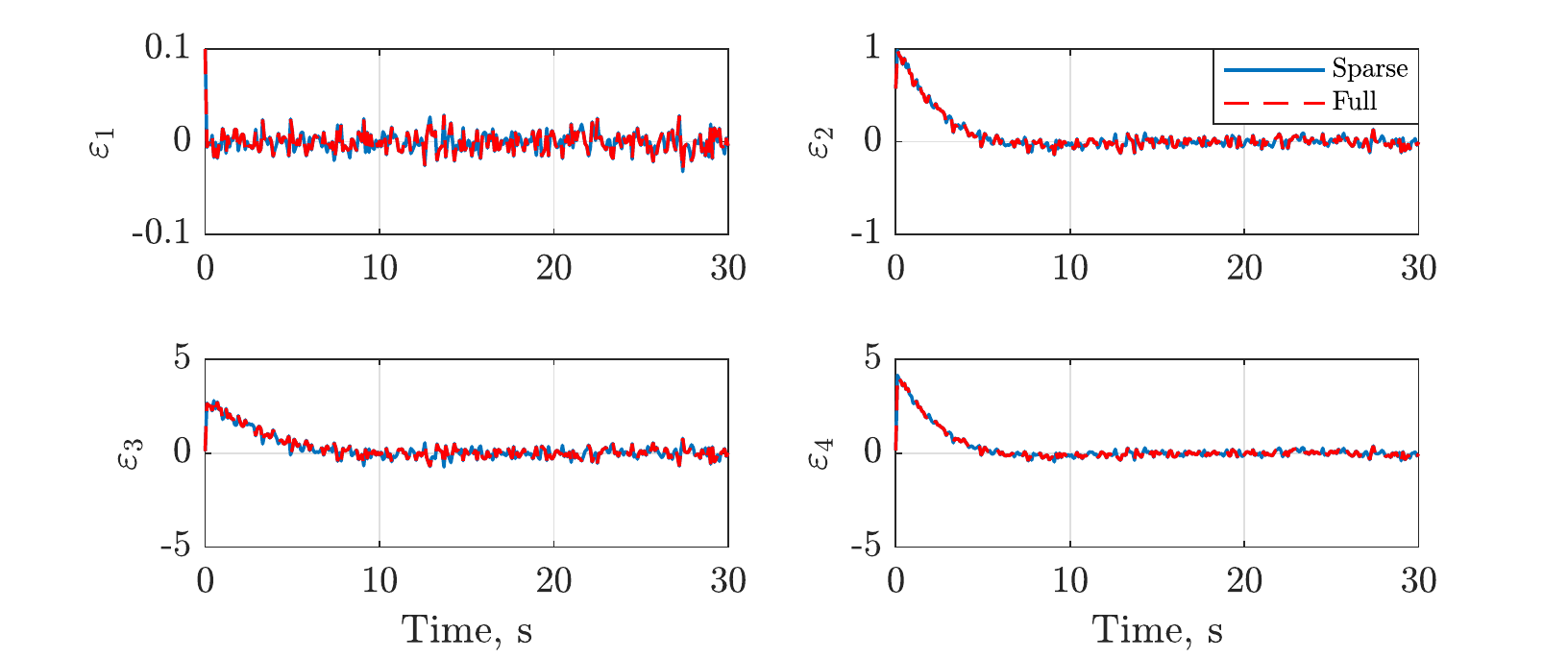}
    \caption{$\Hinf$ observer performance for sparse and full sensor configurations.} \label{fig:hinf_obs_compare}
\end{figure}

%%%% Augmented cost function sim resuls. . %%%%%%%%%%%%%%%%%%
%%%% CAUTION: Plots show un-polished solutions.
\subsubsection*{Augmented cost}
Next, we consider the augmented cost function defined in \eqn{aug_cost}, with upper bound on $\vo{\kappa}^2$. Let us assume that the upper bound is
\begin{align}
    \vo{\kappa}^2\leq \vo{\kappa}_{\text{max}}^2 = \begin{bmatrix}1 & 1 & 0.01 & 0.01 & 2.25\end{bmatrix}^T, \nonumber
\end{align}
and the inequality is elementwise. Such a constraint on $\vo{\kappa}^2$ may arise due to real world limitations, e.g. maximum possible precision with which a sensor can be manufactured. Therefore, \responseOne{the constraint} in terms of $\vo{\beta}$, which is the optimization parameter, can be written as
\begin{align*}
    \vo{\beta}-\vo{\kappa}_{\text{max}} ^2 &\leq 0\quad \text{ for $\Htwo$, and}, \\
    \vo{\beta}-\gamma\vo{\kappa}_{\text{max}} ^2 &\leq 0\quad \text{ for $\Hinf$}.
\end{align*}
The  cost function \eqn{aug_cost} is minimized for the sparse configuration identified in \fig{polished_h2_hinf} with bounds on $\vo{\beta}$ as defined above.
The optimal $\vo{\kappa}^2$ obtained for different values of $c$, for $\Htwo$ and $\Hinf$ observer design are shown in \fig{h2ex2} and \fig{hinfex2} respectively. The titles of subplots also show the corresponding optimal $\gamma$ obtained for the specified $c$.

From \fig{h2ex2}, it is clear that, as more \responseOne{weight} is given to minimizing $\gamma$ in the optimization problem, the optimal value of $\gamma$ decreases (i.e. performance of the observer improves), and the values of $\kappa^2$ increase (i.e. better performance requires higher precision).  We also observe that in the last plot corresponding to $c=1000$, the precision bounds are saturated, which means that $\gamma = 0.29$ is the best performance that can be achieved with this sensor configuration and the upper bounds $\vo{\kappa}_{\text{max}} ^2$. If better performance is desired, one should allow higher sensor precision or a sensor configuration with more number of sensors. This exposes a trade-off between the performance $\gamma$, and the minimal precision and sparse configuration quantified by the $l_1$-norm.
Similar observations can be made for \fig{hinfex2} as well.
% In the first plot of both \fig{h2ex2} and \fig{hinfex2}, we observe that $\kappa$ is non-zero for $\dot{w}$ and $\bar{q}$ sensors. Clearly, as more \responseOne{weight} is given to minimizing $\gamma$ in the optimization problem, $\kappa$ for $\dot{u}$ sensor also becomes non-zero, thus reducing the sparseness of $\vo{\kappa}$.

\begin{figure}[htb]
    \centering
    \includegraphics[trim=2cm 0cm 3cm 0.2cm,width=0.45\textwidth]{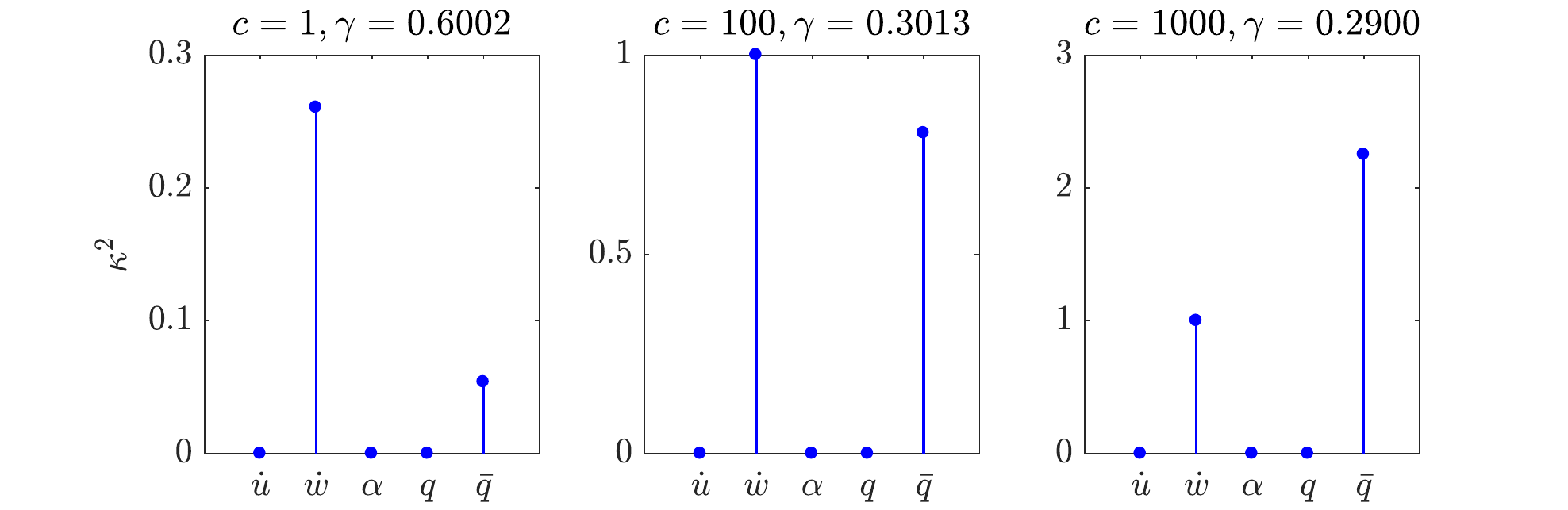}
    \caption{$\Htwo$ optimal precision, $\vo{\kappa}^2\leq \vo{\kappa}_{\text{max}}^2$.}
    \label{fig:h2ex2}
\vspace{1cm}
    \includegraphics[trim=2cm 0cm 3cm 0.2cm,width=0.45\textwidth]{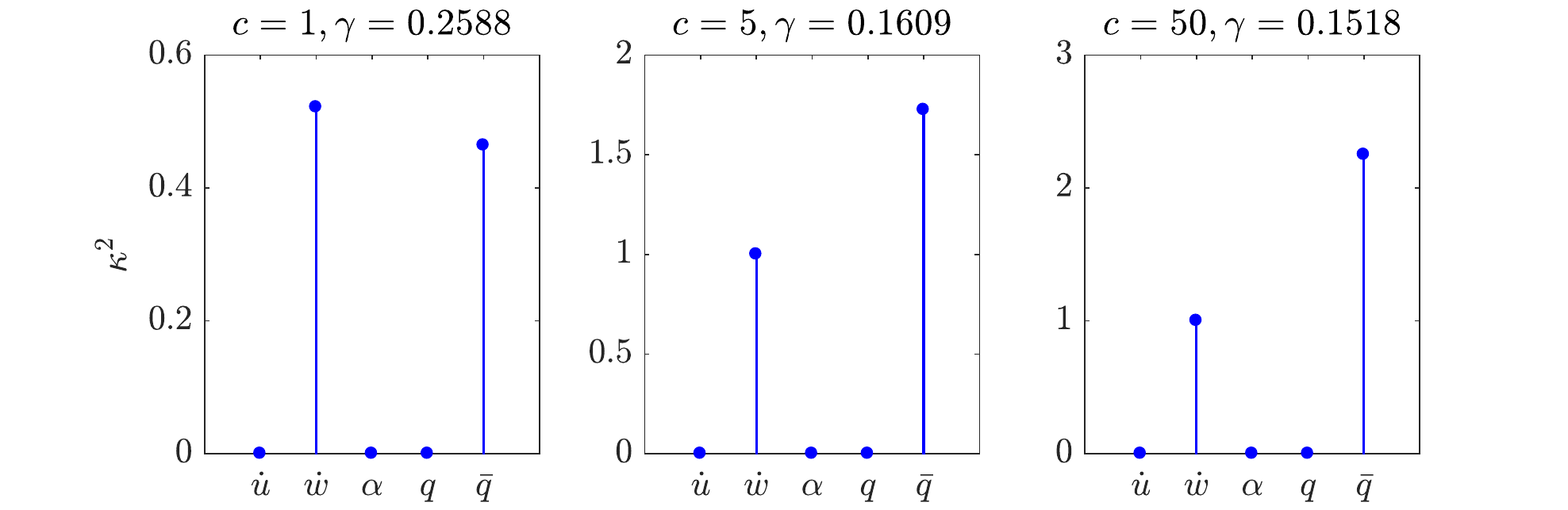}
    \caption{$\Hinf$ optimal precision, $\vo{\kappa}^2\leq \vo{\kappa}_{\text{max}}^2$.}
    \label{fig:hinfex2}
\end{figure}

\section{Conclusion} \label{sec:concl}
This paper presents an integrated theoretical framework to design $\Htwo/\Hinf$ optimal observers with sparse sensor configurations, while simultaneously minimizing the required sensor precision. The precision of sensor is treated as an optimization variable. A convex optimization problem is posed to minimize the sparsity-promoting $l_1$-norm of the sensor precision vector subject to linear matrix inequalities, and the sparse solution is obtained iteratively. Application of the proposed approach is demonstrated on a linearized model of an F-16 aircraft.
We also showed that the upper bounds on precision of sensors can be easily incorporated in the optimization problem, and the minimum possible attenuation level $\gamma$ can also be determined by augmenting the cost function.
For brevity of discussion, the development of customized algorithms to solve the optimization problem efficiently for large-scale systems was not discussed in this paper, and will be a topic of our future work.

% In this paper, we considered the problem of $\Htwo/\Hinf$ observer design for LTI systems. The sensor precision is treated as an unknown variable. The objective is to select a sparse sensor configuration while  simultaneously minimizing  the sensor precision to realize such configuration.

% In this paper, we considered the problem of $\Htwo/\Hinf$ observer design using sparse set of sensors for LTI systems. We showed that the sparse observer can be determined by solving 1-norm minimization problem subject to linear matrix inequalities. The precision of sensor is treated as an optimization variable, thus allowing us to determine the minimum required precision to realize the observer, which guarantees the required accuracy in the state-estimates.

% As an example, sparse $\Htwo/\Hinf$ observer was designed for a linearized plant obtained for an F-16 aircraft at different specified attenuation levels $\gamma$. We showed that the upper bounds on precision of sensors can be easily incorporated in the optimization problem, and the
% minimum possible attenuation level $\gamma$ can also be determined by augmenting the cost function.

% Out of five sensors, precision for three sensors came out to be zero, i.e. sparse $\Htwo/\Hinf$ observer requires only two sensors in order to estimate the states with minimal effect due to the process and sensor noise.

\section*{APPENDIX} \label{appendix}
\noindent System matrices for the linearized F-16 model:
%\begin{align*} \nx =4, \ny=\nv=5,\nz=4,\nd=1,\nw=6\end{align*}.
\begin{align*}
   { \A = \begin{bmatrix} -1.8969e{-02} & -0.40518 & -32.17  & 0.89146 \\
  -6.4397e{-05} & -1.61760 & 0 &  0.93254\\
            0  &          0 &           0 &  1\\
  0 & -2.36830 &           0 & -1.9696 \end{bmatrix}} \end{align*}

\begin{align*}
\Bu =  \begin{bmatrix}
   1.5700e{-03} &  4.7404e{-09} & 0 & 0 \\
   6.6374e{-01} & -3.1441e{-03} & 0 & -5.3433{e-01} \end{bmatrix}^T
\end{align*}

$\Bd = \Bu \left[ 0 \;\, 1 \right]^T$, $\Bv = \vo{0}\in \Real^{4\times 5}$, $\Bw = \left[\Bd\;\;\Bv\right]$

\begin{align*}
\Du =  \begin{bmatrix}
   1.5700e{-03} &  0 & 0 & 0 & 0\\
    6.5425e{-01} &  -3.1461 & 0 & 0 & 0 \end{bmatrix}^T
\end{align*}

$\Dd = \Du \left[ 0 \;\, 1 \right]^T$, $\Dv = \I{5}$, $\Dw = \left[\Dd\quad \Dv\right]$, $\Cz = \I{4}$

\begin{align*}
    & \Cy = \begin{bmatrix}   -0.019164 & -5.2893 & -32.17 &  3.7071\\
  -0.064340 & -1.6176e{+03} &  0.09713 &  932.5332\\
            0 &  1 &           0 &           0\\
            0 &           0 &           0 &  1\\
   1.7578 &           0 &           0 &           0 \end{bmatrix} \end{align*}

$\W_u = \diag(500,5)$, $\W_w = \diag(0.5, \I{5})$, \\

$\W_n = \I{5}$, $\W_d = 0.5$, $\W_w = \diag(\W_d, \W_n)$, \\

$\W_z = \diag(1/100, 180/5\pi, 180/5\pi, 180/2\pi)$.

% \vfill\null
% \columnbreak

% \newpage

\bibliographystyle{unsrt}
\bibliography{root}
\end{document}